	\newwrite\bibnotes
	\def\bibnotesext{Notes.bib}
\write\bibnotes{@CONTROL{REVTEX42Control}}
\write\bibnotes{@CONTROL{%
			apsrev42Control,author="08",editor="0 ",pages="1",title="1",year="1"}}
\write\@auxout{\string\citation{apsrev42Control}}%
\begin{document}

\newcommand{\skipl}{}
\newtheorem{theorem}{Theorem}

\title{
Joint quantum-classical Hamilton variation principle in the phase space
}

\author{Dmitry V. Zhdanov}
\email{dm.zhdanov@gmail.com}
\affiliation{
Tulane University, New Orleans, LA 70118, USA
}
\author{Denys I. Bondar}
\affiliation{
	Tulane University, New Orleans, LA 70118, USA
}
\date{\today}
\begin{abstract}
We show that the dynamics of a closed quantum system obeys the Hamilton variation principle. Even though quantum particles lack well-defined trajectories, their evolution in the Husimi representation can be treated as a flow of multidimensional probability fluid in the phase space. By introducing the classical counterpart of the Husimi representation in a close analogy to the Koopman-von Neumann theory, one can largely unify the formulations of classical and quantum dynamics. We prove that the motions of elementary parcels of both classical and quantum Husimi fluid obey the Hamilton variational principle, and the differences between associated action functionals stem from the differences between classical and quantum pure states. The Husimi action functionals are not unique and defined up to the Skodje flux gauge fixing [R. T. Skodje et al{.} Phys.~Rev.~A \textbf{40}, 2894 (1989)]. We demonstrate that the gauge choice can dramatically alter flux trajectories. Applications of the presented theory for constructing semiclassical approximations and hybrid classical-quantum theories are discussed.
\end{abstract}
\maketitle

\section{Introduction}
Variational principles are the cornerstones for both classical and quantum mechanics, but their content is different in these two theories. The classical Hamilton variation principle (see, e.g., \cite{BOOK-Landau}) provides a recipe for identifying the phase space trajectories of classical particles via solving an extremal problem. The quantum-mechanical Dirac-Frenkel variational principle \cite{1934-Frenkel} is similar in that it also enables finding the time-dependent wavefunction as the solution of an extremal problem. The Dirac-Frenkel principle found applications in computational chemistry and many-body physics (see, e.g., Refs.~\cite{2000-Beck,2018-Benedikter,2004-Lubich,2000-Militzer}). However, its utility is limited in comparison to the classical counterpart because the phase space variables are merely formal parameters, half of which are implicitly defined and are hard to compute. In contrast, the Feynman's path integral approach \cite{2005-Feynman}, which is closely related to the Schwinger quantum action principle \cite{BOOK-Milton,1996-Thoss,2021-Manjarres}, can be specialized in terms of conventional phase space variables and is useful in a plethora of applications, from chemistry to quantum gravitation \cite{1993-Voth,1994-Storey,1979-Gerry,1994-Storey,2009-Tempere,2008-Hamber}. However, this approach involves the trajectory-selecting variational analysis only at the level of the semiclassical stationary phase approximation. 

At first glance, the Hamilton variation principle cannot be formulated for quantum particles since they lack well-defined trajectories in the phase space. We will show that this obstacle is illusive. The resolution comes from the following known facts about classical and quantum dynamics, which, however, were never systematically analyzed together:
\begin{enumerate}
	\item Both classical and quantum mechanics support the exact hydrodynamic analogies, in which state evolution is represented as a flow of effective probability fluid in the phase space \cite{1989-Skodje}. Thus, both classical and quantum dynamics can be described as a transport of fluid parcels along well-defined trajectories.
	\item ``The [Hamilton] formalisms in each form of mechanics are constructed to describe only pure states'' \cite{1979-Shirokov}.   
\end{enumerate}

The value of our Hamilton theory behind quantum phase space trajectories in explaining the observed quantum phenomena is analogous to the value of microscopic theory of classical fluid in explaining its macroscopic behavior. In particular, the present work lays theoretical foundations behind semiempirical and variationally defined quantum phase space trajectories widely and successfully used for modeling chemical reaction dynamics since the seminal papers by Eric Heller \cite{1975-Heller,1981-Heller}. Furthermore, the proposed theory presents the mathematical machinery for analyzing an unexplored infinite family of alternative rigorous definitions of phase space trajectories. 

This paper is organized as follows. First, in Sec.~\ref{*SEC:Math*} we will set up the unified framework for the mathematical description of both quantum and classical systems based on classical statistical mechanics and the Husimi representation of quantum mechanics. We also introduce the notion of admissible trajectory variations, which will enable us to express both quantum and classical mechanics on the same footing. Then, in Sec.~\ref{*SEC:JHVP} we will postulate the quantum-classical Hamilton variation principle and derive the generalized Hamilton equations for both theories. The constructed theory exhibits gauge freedom akin to one found in classical electrodynamics. In Sec.~\ref{*SEC:EHVP} we show how this gauge freedom can be incorporated into the variation principle. The paper concludes with a brief summary and outlook.

\section{Mathematical preliminaries\label{*SEC:Math*}}
The prerequisite for formulating the generalized quantum-classical Hamilton variation principle is to express both dynamics in terms of phase space trajectories. In this section we set up a theoretical framework allowing to do so for a generic $N$-particle bosonic system. The associated classical phase space is $2N$-dimensional and consists of $N$ momenta $\pp{=}\{p_1,...,p_N\}$ and $N$ positions $\xx{=}\{x_1,...,x_N\}$. The phase space formulations of quantum mechanics are diverse \cite{1966-Cohen} and have been known since its early days. The Wigner representation \cite{1932-Wigner,BOOK-Zachos,1989-Cohen} is arguably the most familiar one. The Wigner representation adopts the language of classical statistical mechanics. The system state is described by the real-valued Wigner function $\WF(\pp,\xx)$ defined as 
\begin{gather}\label{mp.-Wigner_function}
\WF(\pp,\xx){=}\tfrac{1}{(2\pi\hbar)^{\dimensionality}}\inftyints\matel{\xx{+}\tfrac{\llambda}2}{\hat{\rho}}{\xx{-}\tfrac{\llambda}2}e^{{-}i\frac{\pp\cdot\llambda}{\hbar}}\diff^{\dimensionality}\llambda,
\end{gather}
where $\hat{\rho}$ is the system's density matrix and $\ket{\xx}$ is the position eigenstate. The Wigner function allows to calculate the mean value $\bar O$ of any physical quantity $O$ alike the classical statistical average:
\begin{gather}\label{mp.-Wigner_averaging}
\bar O{=}\midop{\qw{O}\WF}, 
\end{gather}
where $\midop{...}{=}\inftyints...\diff^N\pp\diff^N\xx$ denotes the integration over the entire $2N$-dimensional phase space. The Weyl symbol $\qw{O}{=}\qw{O}(\pp,\xx)$ entering Eq.~\eqref{mp.-Wigner_averaging}  coincides with the expression for the associated classical physical quantity $\cw{O}(\pp,\xx)$ in the special case $\qw{O}{=}{\qw{O}}'(\pp){+}{\qw{O}}''(\xx)$. Furthermore, the normalization $\midop{\WF(t)}{=}1$ of $\WF$ is preserved at all times, and hence $\WF$ resembles a classical probability distribution $\PD$. Nevertheless, $\PD$ differs from $\WF$ in that the latter can take negative values \cite{2018-Oliva}. Perhaps, the closest classical analogy to the dynamics of $\WF$ is the charge redistribution inside an isolated medium, where the total charge is conserved, but the numbers of free positive and negative charges may vary due to charge separation and recombination. The varying number of effective changes is a fatal drawback for introducing a quantum analog of classical action, because the latter assumes a fixed number of particles and deals with trajectories having no sources and sinks.

Fortunately, quantum mechanics offers a remarkably simple way to ``fix'' the Wigner function negativity via the Gaussian convolution with the kernel
\begin{gather}\label{mp.-Husimi_kernel}
\KernW_{\wwW_{\pp},\wwW_{\xx}}(\delta\pp,\delta\xx){=}
\textstyle{\prod_{n=1}^{\dimensionality}}
\frac{1}{2\pi\wW_{p_n}\wW_{x_n}}e^{{-}\frac{\delta p_n^2}{2\wW_{p_n}^2}{-}\frac{\delta x_n^2}{2\wW_{x_n}^2}},
\end{gather}
where $\wW_{p_n}$ and $\wW_{x_n}$ are arbitrary positive numbers, such that $\wW_{p_n}\wW_{x_n}{=}\tfrac{\hbar}2$.
The resulting state function
\begin{gather}
\HF(\pp,\xx){=}\KernWD_{\wwW_{\pp},\wwW_{\xx}}\WF(\pp,\xx){=}\inftyints\WF(\pp',\xx')\times\notag\\
\KernW_{\wwW_{\pp},\wwW_{\xx}}(\pp{-}\pp',\xx{-}\xx') \diff^N\pp'\diff^N\xx'\label{mp.-HF_definition}
\end{gather}
reduces to
\begin{gather}\label{mp.-HF_definition_pure}
\HF(\ppc,\xxc){=}\tfrac{\matel{\ppc,\xxc}{\hat\rho}{\ppc,\xxc}}{(2\pi\hbar)^N}{\geq}0,
\end{gather}
where $\ket{\ppc, \xxc}$ is the multidimensional squeezed coherent state localized at $\{\ppc, \xxc\}$:
\begin{align}\label{mp.-CS_definition}
\scpr{\xx}{\ppc,\xxc}{=} \textstyle{\prod_{n{=}1}^{\dimensionality}}\pi^{{-}\frac14}\wc_n^{{-}\frac12}
e^{-\frac{(x_n{-}\xc_{n})^2}{2\wc_n^2}{+}\frac i{\hbar}\pc_{n}(x_n{-}\xc_{n})}.
\end{align}
Everywhere non-negative $\HF(\pp,\xx)$ is known as the Husimi function \cite{1940-Husimi}. Since the Husimi function preserves the normalization $\midop{\HF(t)}{=}1$, it satisfies the formal requirements to be a probability distribution%
\footnote{The convolution operator $\KernWD_{\wwW_{\pp},\wwW_{\xx}}$ can be re-expressed in the differential form	\begin{gather}\label{mp.-HF_definition_diff}
	\KernWD_{\wwW_{\pp},\wwW_{\xx}}{=}\prod_{n=1}^{\dimensionality}\exp({\tfrac12\wW_{p_n}^2\tpder{^2}{p_n^2}{+}\tfrac12\wW_{x_n}^2\tpder{^2}{x_n^2}})
\end{gather} 
(see, e.g., Ref.~\cite{2003-Ulmer}). Hence, operator $\KernWD_{\wwW_{\pp},\wwW_{\xx}}$ is translation-invariant.}%
.
Hereafter, we assume fixed values for the width parameters $\wwW_{\pp}$ and $\wwW_{\xx}$ and omit the associated indices for brevity.

By analogy with the Wigner representation, it is convenient to introduce the Husimi symbols $\qh{O}=\KernWD\qw{O}$ of physical observables. For any three quantum operators $\hat A$, $\hat B$ and $\hat C$, such that $\hat C{=}\hat A\hat B$, the associated Weyl and Husimi symbols relate as
\begin{gather}\label{mp.-algerbra_qn}
\qw{C}{=}\qw{A}{\star}\qw{B},~~~
\qh{C}{=}\qh{A}\starH\qh{B},
\end{gather}
where 
\begin{gather}\label{mp.-Moyal_product}
\star{=}\exp\big(i\tfrac{\hbar}2\psnprod\big),~~\starH{=}\star\scprH,~~\psnprod{=}\pderl{\xx}{\cdot}\pderr{\pp}{-}\pderl{\pp}{\cdot}\pderr{\xx},\\
\label{mp.-Husimi_scalar_product}
\scprH{=}\exp\left({\sum_{n{=}1}^{\dimensionality}\left(\wW_{x_n}^2\pderl{x_n}\pderr{x_n}{+}\wW_{p_n}^2\pderl{p_n}\pderr{p_n}\right)}\right).
\end{gather}
The arrows in the above expressions indicate the directions of differentiation, e.g., $f(\pp,\xx)\pderl{\xx}g(\pp,\xx)=g(\pp,\xx)\pderr{\xx}f(\pp,\xx){=}\tpder{f(\pp,\xx)}{\xx}g(\pp,\xx)$ \cite{2010-Polkovnikov}. Operation $\psnprod$ represents the classical Poisson bracket. Operation $\star$ is called the Moyal product \cite{1946-Groenewold}%
\footnote{Both $\star$ and $\starH$ are associative non-commutative generalized multiplication operations, which do not explicitly depend on $\pp$ and $\xx$. The latter ensures the translation invariance of the corresponding algebra in the phase space.}%
.%

In order to put quantum and classical descriptions on the same footing, let us introduce a Husimi-like representation of classical mechanics, where system states and symbols of physical quantities are defined as $\HD{=}\KernWD\PD$ and $\ch{O}{=}\KernWD\cw{O}$. If three classical physical quantities $\cl{A}$, $\cl{B}$ and $\cl{C}$ are related as $\cl{C}{=}\cl{A}\cl{B}$, then 
\begin{gather}\label{mp.-algerbra_cl}
\ch{C}{=}\ch{A}\scprH\ch{B}{=}\ch{B}\scprH\ch{A}.
\end{gather}
The transformations $\WF{\to}\HF$ and $\PD{\to}\HD$ are formally invertible and are analogous to anti-blurring of Gaussian-blurred images. However, these inverse transformations are ill-defined.

The formula to calculate the mean value of observables is identical for the classical and quantum Husimi representations:
\begin{gather}
\bar O{=}\midop{\hs{O}{\scprH}\hs{Q}}.
\end{gather}
Both $\HF$ and $\HD$ are non-negative and remain normalized at all times, so their evolution resembles the flow of a compressible fluid. Hence, both the state functions should formally satisfy the continuity equation \cite{1989-Skodje,2013-Veronez}
\begin{gather}\label{mp.-continuity_equation_for_Q}
\tpder{}{t}\hs{Q}{=}{-}\tpder{}{\xx}{\cdot}(\dot{\xx}\hs{Q}){-}\tpder{}{\pp}{\cdot}(\dot{\pp}\hs{Q}),
\end{gather}
where $\dot{\xx}$ and $\dot{\pp}$ are generalized phase space velocities of elemental fluid parcels. It is worth to stress the difference between evolution of a classical Husimi fluid $\HD$ and a conventional probability distribution $\PD$. In the latter case, the phase space velocities $\dot{\xx}(\ppc,\xxc)$ and $\dot{\pp}(\ppc,\xxc)$ characterize kinematics of a physical particle $\PD{\propto}\delta_{\pp{-}\ppc(t),\xx{-}\xxc(t)}$ located at $\{\ppc,\xxc\}$. In the classical Husimi picture, this particle is represented by the Gaussian blob 
\begin{gather}\label{mp.-classical_pure_state}
\ch{Q}=\textstyle{\prod_{n=1}^{\dimensionality}}
\frac{1}{2\pi\wW_{p_n}\wW_{x_n}}\exp\left({{-}\tfrac{(\pc_n{-} p_n)^2}{2\wW_{p_n}^2}{-}\tfrac{(\xc_n{-}x_n)^2}{2\wW_{x_n}^2}}\right),
\end{gather}
and its time evolution is described by a bundle of trajectories covering the entire phase space. Each trajectory can be chosen as a shifted replica of the actual physical trajectory $\{\ppc(t),\xxc(t)\}$. Note, however, that parallel translations can be combined with arbitrary circulations of fluid parcels inside the blob. This freedom is the essence of the Skodje gauge invariance discussed in Sec.~\ref{*SEC:EHVP}. Furthermore, the classical Liouvillian in the Husimi picture is non-local and in that sense is like quantum Liouvillian. Consequently, when considering a statistical ensemble of two or more particles, the associated elemental fluid trajectories cease to replicate actual physical trajectories. 

Thus, the Husimi representation brings classical and quantum mechanics onto the common ground by enabling a consistent fluid description of quantum mechanics and complicating the description of classical dynamics with quantum-like artifacts. In this picture, point particles described by Eq.~\eqref{mp.-classical_pure_state} play the role of classical pure states. Quantum pure states can be identified in the Wigner and Husimi representations via the following relations:
\begin{subequations}\label{mp.-purity_definition}
\begin{gather}\label{mp.-purity_definition_cl}
(2\pi\hbar)^{N}\WF{\star}\WF{=}\WF,\\
(2\pi\hbar)^{N}\HF{\starH}\HF{=}\HF.\label{mp.-purity_definition_qn}
\end{gather}
\end{subequations}

Recall that the Hamilton variation principle is formulated for a pure classical state and considers a variation of a single physical trajectory. However, any time-dependent pure state in the Husimi picture is represented by a bundle of coupled trajectories, and an arbitrary single-trajectory variation would generally break the purity. We resolve these issues as follows:
\begin{itemize}
\item We look for a variation principle applicable to every trajectory in the bundle. 
\item We admit only collective trajectories variations, which keep the purity conserved and satisfy the following definition:
\end{itemize}

\begin{definition}\label{mp.-def_admissible_variation}
	An arbitrary time-dependent parallel translation $\{\delta\xx(t),\delta\xx(t)\}$ of all phase space points $\{\pp,\xx\}{\to}\{\pp{+}\delta\pp(t),\xx{+}\delta\xx(t)\}$ is called \emph{the admissible variation} of a system state $\hs{Q}(\pp,\xx,t)$.
\end{definition}

Apart from fluid circulations inside the blobs, the admissible variations are the only generic purity-preserving variations applicable for any pure (classical or quantum) state.

To facilitate the definition of the action for individual Husimi trajectories, it is convenient to rewrite the fluid analogy \eqref{mp.-continuity_equation_for_Q} in the Lagrangian picture by assigning a Lagrangian label $L$ to each fluid parcel \cite{1988-Salmon}. The specific form of the label is not important for our discussion, the only important requirement is that the label moves with the fluid and does not change in time for a specific fluid parcel. For instance, $L$ can be a $2\dimensionality$-dimensional real vector of the phase space coordinates of a fluid parcel at the initial time. Such labels would define a Lagrangian frame moving with the fluid. Regardless of their structure, the labels obey the evolution equation
\begin{gather}\label{mp.-labels-evolution}
\tpder{}{t}L(\pp,\xx,t){=}{-}\dot{\xx}{\cdot}\tpder{}{\xx}L(\pp,\xx,t){-}\dot{\pp}{\cdot}\tpder{}{\pp}L(\pp,\xx,t).
\end{gather}

One can check that the admissible variations satisfy the following property, which holds for an arbitrary classical or quantum state $\hs{Q}$, an arbitrary phase space operator $\kvnO(\pp,\xx,\pder{}{\pp},\pder{}{\xx})$ and an arbitrary fluid trajectories labeling $L$:
\begin{gather}\label{mp.-matel_var}
\delta\midop{L\kvnO\hs{Q}}{=}\midop{L(\delta\xx{\cdot}\tpder{\kvnO}\xx{+}\delta\pp{\cdot}\tpder{\kvnO}{\pp})\hs{Q}}.
\end{gather}

The gauge invariance analysis in Sec.~\ref{*SEC:EHVP} will additionally require us to deal with time-dependent properties of trajectories. We will call such properties as features $F(\pp,\xx,t)$. By definition, features also obey Eq.~\eqref{mp.-matel_var}, but their evolution is described by the equation
\begin{gather}\label{mp.-features-evolution}
\tpder{}{t}F{=}{-}\dot{\xx}{\cdot}\tpder{}{\xx}F{-}\dot{\pp}{\cdot}\tpder{}{\pp}F{+}\tder{}{t}F.
\end{gather}

\section{Joint quantum-classical Hamilton variation principle\label{*SEC:JHVP}}
Consider an arbitrary (not necessirely unique) Lagrangian labeling $L$ of effective fluid parcels representing the state $\hs{Q}(t)$ of a closed classical or quantum system. Let us introduce the \emph{Hamilton generator} operator $\hs{\hamgen}{=}\hs{\hamgen}(\pp,\xx,\tpder{}{\pp},\tpder{}{\xx})$ defined as
\begin{subequations}\label{hp.-Husimi_ham_gen}
\begin{align}
&\qh{\hamgen}{=}\qh{H}(\pp,\xx)\sinc\left(\tfrac{\hbar}2\psnprod\right),&\mbox{(quantum mechanics)}\label{hp.-Husimi_ham_gen_q}\\
&\ch{\hamgen}{=}\ch{H}(\pp,\xx),&\mbox{(classical mechanics)}\label{hp.-Husimi_ham_gen_c}
\end{align}
\end{subequations}
where $\qh{H}$ and $\ch{H}$ are the Husimi symbols of the quantum and classical Hamiltonians, respectively, and $\sinc(z){=}\tfrac{\sin(z)}z$.

\begin{theorem}[Joint Hamilton variation principle]\label{*theorem-jhvp}
The evolution of both classical and quantum states $\hs{Q}$ satisfies the Hamilton variation principle
\begin{gather}\label{hp.-JHVP}
\delta S{=}\delta\timeint\diff t\big(\midop{L(\pp{\cdot}\dot{\xx})\hs{Q}}{-}\midop{L(\hs{\hamgen}{\scprH}\hs{Q})}\big){=}0,
\end{gather}
relative to all admissible (in the sense of Definition~\ref{mp.-def_admissible_variation}) variations $\delta S$ of action $S$ fulfilling the boundary conditions 
\begin{gather}\label{hp.-boundary_conditions_on_var}
\delta\xx(t_{\mathrm i}){=}\delta\xx(t_{\mathrm f}){=}0,~~ \delta\pp(t_{\mathrm i}){=}\delta\pp(t_{\mathrm f}){=}0
\end{gather}
at the initial and final times $t_{\mathrm i}$ and $t_{\mathrm f}$.
\end{theorem}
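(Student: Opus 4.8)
The plan is to evaluate the first variation $\delta S$ explicitly, show that its vanishing for all admissible $\delta\xx(t),\delta\pp(t)$ is equivalent to a pair of generalized Hamilton equations for the fluxes $\dot{\xx}\hs{Q}$ and $\dot{\pp}\hs{Q}$, and then verify that these flux equations, reinserted into the continuity equation~\eqref{mp.-continuity_equation_for_Q}, reproduce the exact quantum and classical evolution carried by the Hamilton generator~\eqref{hp.-Husimi_ham_gen}. Since an admissible variation is a phase-space-uniform (but time-dependent) shift, $\delta\xx(t)$ and $\delta\pp(t)$ factor out of every $\midop{\cdots}$, which is what ultimately lets me read off pointwise equations.

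First I would vary the two pieces of~\eqref{hp.-JHVP} separately. For the Hamiltonian piece $\midop{L(\hs{\hamgen}\scprH\hs{Q})}$ the kernel $\hs{\hamgen}\scprH$ is a fixed, velocity-independent phase-space operator, so property~\eqref{mp.-matel_var} applies verbatim with $\kvnO=\hs{\hamgen}\scprH$, and the variation of this piece is $\int\!\diff t\,\midop{L(\delta\xx\cdot\partial_{\xx}+\delta\pp\cdot\partial_{\pp})(\hs{\hamgen}\scprH)\hs{Q}}$. The canonical piece $\midop{L(\pp\cdot\dot{\xx})\hs{Q}}$ cannot be treated this way, because $\dot{\xx}$ is the time derivative of the varied trajectory rather than a fixed operator. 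I would instead pass to the Lagrangian frame, labelling parcels by a time-independent coordinate carrying the conserved measure $\hs{Q}\,\diff^{N}\pp\,\diff^{N}\xx$, so the piece reads $\int\!\diff t\,\midop{L\,\pp\cdot\partial_t\xx}$ with $L$ and the measure frozen. The shift sends $\partial_t\xx\mapsto\partial_t\xx+\tder{}{t}\delta\xx$; an integration by parts in time, whose endpoint terms vanish by~\eqref{hp.-boundary_conditions_on_var} and which uses $\tder{}{t}\midop{L\pp\hs{Q}}=\midop{L\dot{\pp}\hs{Q}}$, converts $\pp\cdot\tder{}{t}\delta\xx$ into $-\dot{\pp}\cdot\delta\xx$, so this piece contributes $\int\!\diff t\,\midop{L(\delta\pp\cdot\dot{\xx}-\delta\xx\cdot\dot{\pp})\hs{Q}}$.

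Collecting the coefficients of the independent arbitrary functions $\delta\xx(t)$ and $\delta\pp(t)$ and imposing $\delta S=0$ yields $\midop{L[\dot{\xx}-\partial_{\pp}(\hs{\hamgen}\scprH)]\hs{Q}}=0$ together with $\midop{L[\dot{\pp}+\partial_{\xx}(\hs{\hamgen}\scprH)]\hs{Q}}=0$; because the labelling $L$ is arbitrary, these strip to the pointwise generalized Hamilton equations
\begin{gather}
\dot{\xx}\,\hs{Q}=\partial_{\pp}(\hs{\hamgen}\scprH)\hs{Q},\qquad \dot{\pp}\,\hs{Q}=-\partial_{\xx}(\hs{\hamgen}\scprH)\hs{Q}.
\end{gather}
The concluding step is verification. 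Substituting these fluxes into~\eqref{mp.-continuity_equation_for_Q}, distributing the outer divergences across the constant-coefficient $\scprH$ product, and cancelling the equal mixed second derivatives of $\hs{\hamgen}$ leaves $\tpder{}{t}\hs{Q}=\hs{\hamgen}\,\psnprod\,\scprH\,\hs{Q}$. For classical mechanics $\hs{\hamgen}=\ch{H}$ and this is precisely the Husimi image of the Liouville equation; for quantum mechanics the product rules~\eqref{mp.-algerbra_qn} recast the von Neumann equation as $\qh{H}\,\scprH\,\psnprod\,\sinc(\tfrac{\hbar}2\psnprod)\HF$, and recognizing $\qh{H}\sinc(\tfrac{\hbar}2\psnprod)=\qh{\hamgen}$ closes the match.

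I expect the main obstacle to be the canonical piece rather than the Hamiltonian one: since $\dot{\xx}$ is not a fixed phase-space operator, property~\eqref{mp.-matel_var} is inapplicable to it, and one must commit to the Lagrangian bookkeeping, relying on the label evolution~\eqref{mp.-labels-evolution} and the conservation of $\hs{Q}\,\diff^{N}\pp\,\diff^{N}\xx$ to keep the time integration by parts and the boundary terms clean. The conceptual crux of the verification step is the $\sinc$ factor in~\eqref{hp.-Husimi_ham_gen_q}, which is exactly what recasts the Moyal bracket in divergence (continuity) form; this is the single place where the quantum case departs nontrivially from the classical one.
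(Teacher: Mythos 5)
Your proposal is correct and follows essentially the same route as the paper's proof: apply the variation identity \eqref{mp.-matel_var} to the Hamiltonian term, handle the canonical term $\midop{L(\pp{\cdot}\dot{\xx})\hs{Q}}$ by an integration by parts in time using the label transport \eqref{mp.-labels-evolution}, the continuity equation and the boundary conditions \eqref{hp.-boundary_conditions_on_var}, strip off the arbitrary $L$, $\delta\xx$, $\delta\pp$ to obtain the generalized Hamilton equations \eqref{hp.-ham_eqs}, and verify by substitution into \eqref{mp.-continuity_equation_for_Q} that these reproduce the classical Liouville and quantum Moyal dynamics via the $\sinc$ structure of $\qh{\hamgen}$. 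The only difference is expository: you make the Lagrangian-frame bookkeeping of the canonical term more explicit than the paper does, but the logic is identical.
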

\begin{proof}
Consider two independent admissible variations $\pp{\to}\pp{+}\delta\pp(t)$ and $\xx{\to}\xx{+}\delta\xx(t)$ of fluid trajectories in Eq.~\eqref{hp.-JHVP}. With the help of identity \eqref{mp.-matel_var}, one gets\begin{subequations}
\label{hp.-la_vars}\begin{gather}
	\label{hp.-la_vars_p}
	\timeint\diff t~\delta\pp{\cdot}\midop{L(\dot{\xx}\hs{Q}{-}\tpder{\hs{\hamgen}}{\pp}{\scprH}\hs{Q})}{=}0,\\
	\label{hp.-la_vars_x}
	\timeint\diff t~ \delta\xx{\cdot}\midop{L({-}\dot{\pp}\hs{Q}{-}\tpder{\hs{\hamgen}}{\xx}{\scprH}\hs{Q})}{=}0.
	\end{gather} 
\end{subequations}
To obtain Eq.~\eqref{hp.-la_vars_x}, we also used the relation \begin{gather}\delta\timeint\diff t\midop{L(\pp{\cdot}\dot{\xx})\hs{Q}}{=}\delta\timeint\diff t\midop{L(\tder{(\pp{\cdot}\xx)}{t}-\dot{\pp}{\cdot}\xx)\hs{Q}}=\notag\\
{-}\delta\timeint\diff t\midop{L(\dot{\pp}{\cdot}\xx)\hs{Q}},
\end{gather}
where Eqs.~\eqref{mp.-continuity_equation_for_Q}, \eqref{mp.-labels-evolution} and the boundary conditions \eqref{hp.-boundary_conditions_on_var} are utilized.

Since $L(\pp,\xx)$, $\delta\xx(t)$ and $\delta\pp(t)$ are arbitrary functions, Eqs.~\eqref{hp.-la_vars} can be satisfied only if
\begin{subequations}\label{hp.-ham_eqs}
	\begin{gather}\label{hp.-ham_eq_p}
	\dot{\pp}\hs{Q}(\pp,\xx){=}{-}\tpder{\hs{\hamgen}}{\xx}{\scprH}\hs{Q}(\pp,\xx),\\
	\label{hp.-ham_eq_x}
	\dot{\xx}\hs{Q}(\pp,\xx){=}\tpder{\hs{\hamgen}}{\pp}{\scprH}\hs{Q}(\pp,\xx).
	\end{gather}
\end{subequations}
We call Eqs.~\eqref{hp.-ham_eqs} the generalized Hamilton equations. Their substitution into the continuity equation \eqref{mp.-continuity_equation_for_Q} gives after little algebra 
\begin{gather}\label{hs.-gen_Liouville-Husimi_equation}
\tpder{}{t}\hs{Q}{=}\hs{\hamgen}{\psnprod}{\scprH}\hs{Q},
\end{gather}
where $\psnprod$ is the classical Poisson bracket operation defined by Eq.~\eqref{mp.-Moyal_product}. Using the definitions \eqref{hp.-Husimi_ham_gen} of the Hamilton generators for classical and quantum mechanics, Eq.~\eqref{hs.-gen_Liouville-Husimi_equation} expands into
\begin{subequations}\label{hs.-gen_Liouville-Husimi_equation_exp}
\begin{gather}\label{hs.-gen_Liouville-Husimi_equation_exp_qn}
\tpder{}{t}\qh{Q}{=}{-}\tfrac{i}{\hbar}\left(\qh{H}{\starH}\qh{Q}{-}\qh{Q}{\starH}\qh{H}\right),\\
\tpder{}{t}\ch{Q}{=}\ch{H}{\psnprod}{\scprH}\ch{Q}.
\end{gather}
\end{subequations}
\begin{subequations}\label{hs.-gen_Liouville_equation}
Application of the inverse Husimi transformation to both sides of these equations gives the familiar classical Liouville equation 
\begin{gather}\label{hs.-gen_Liouville_equation_cl}
\tpder{}{t}\cw{P}{=}\cw{H}{\psnprod}\cw{P},
\end{gather}
and the quantum Liouville equation in the Wigner representation
\begin{gather}\label{hs.-gen_Liouville_equation_qn}
\tpder{}{t}\qw{P}{=}{-}\tfrac{i}{\hbar}\left(\qw{H}{\star}\qw{P}{-}\qw{P}{\star}\qw{H}\right).
\end{gather}
\end{subequations}
This finishes the proof.
\end{proof}

\section{Extended Hamilton variation principle. Skodje gauge freedom\label{*SEC:EHVP}}
We mentioned in Sec.~\ref{*SEC:Math*} the analogy between the dynamics of a quantum state $\wg{P}$ and an electric charge distribution. According to Noether's theorem, the charge conservation originates from the gauge invariance of electromagnetic potentials. Similarly, the hydrodynamic phase space velocities $\dot{\pp}$ and $\dot{\xx}$ in Eq.~\eqref{mp.-continuity_equation_for_Q} are defined up to arbitrary divergenceless fluxes, which do not alter the local density of effective fluid $\hs{Q}$. This fact will be hereafter referred as the Skodje flux gauge invariance to pay tribute to the seminal paper \cite{1989-Skodje}. In this section, we generalize the definition of the action $S$ in the Husimi representation to account for the Skodje gauge freedom.

We will need the following additional notations. Let  $\zz{=}\{\xx,\pp\}^{\intercal}$, $\tilde{\zz}{=}\Omega\zz{=}\{\pp,{-}\xx\}^{\intercal}$ and $\dot{\zz}{=}\{\dot{\xx},\dot{\pp}\}^{\intercal}$ be two $2N$-dimensional vectors of the phase space coordinates and velocities. Here
\begin{gather}\label{hp.-symplectic_matrix}
\Omega{=}\begin{pmatrix}0&I_N\\-I_N&0\end{pmatrix},
\end{gather}
is symplectic matrix, where $I_N$ is the $N{\times}N$ identity matrix. Denote by $\GFC_{i,j}(\pp,\xx,t)$ $(i,j{=}1,...,2N)$ the $2N{\times}2N$ set of arbitrary real-valued features (introduced at the end of Sec.~\ref{*SEC:Math*}), such that $\GFC_{i,j}{=}{-}\GFC_{j,i}$.

\begin{theorem}[The extended Hamilton variation principle]\label{*theorem-ehvp}
The statement of Theorem~\ref{*theorem-jhvp} also applies to the extended action
\begin{gather}
S{=}\timeint\diff t\big(\midop{L(\pp{\cdot}\dot{\xx})\hs{Q}}{-}\midop{L(\hs{\hamgen}{\scprH}\hs{Q})}{+}\midop{L(\Acal{+}\tilde{\zz}{\cdot}\LLcal)}\big),\tag{\ref*{hp.-JHVP}*}\label{hp.-EHVP}
\end{gather}
where $\Acal(\pp,\xx,t)$ is the Skodje gauge potential, which is an arbitrary twice differentiable function of its arguments, and the $2N$-dimensional vector $\LLcal$ with elements $\Lcal_i{=}\sum_{j{=}1}^{2N}\tpder{\GFC_{i,j}}{z_j}$ plays the role of the Skodje vector gauge.
\end{theorem}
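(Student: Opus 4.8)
The plan is to show that the two extra terms added to the action in Eq.~\eqref{hp.-EHVP}, namely $\midop{L(\Acal{+}\tilde{\zz}{\cdot}\LLcal)}$, contribute precisely a divergenceless flux to the generalized Hamilton equations, so that the continuity equation \eqref{mp.-continuity_equation_for_Q} and hence the dynamics \eqref{hs.-gen_Liouville-Husimi_equation} are left unchanged. I would begin exactly as in the proof of Theorem~\ref{*theorem-jhvp}: apply the two independent admissible variations $\pp{\to}\pp{+}\delta\pp(t)$ and $\xx{\to}\xx{+}\delta\xx(t)$ and use identity \eqref{mp.-matel_var} to convert the variation of each term into a bracket weighted by $\delta\pp$ or $\delta\xx$. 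The first two terms reproduce the left-hand sides of Eqs.~\eqref{hp.-ham_eq_p} and \eqref{hp.-ham_eq_x}; the whole content of the proof is to compute the variation of the new term and show it augments $\dot\pp$ and $\dot\xx$ only by divergenceless contributions.

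The key computation is the variation of $\midop{L(\Acal{+}\tilde{\zz}{\cdot}\LLcal)}$. First I would handle the gauge potential $\Acal(\pp,\xx,t)$: since $\Acal$ is a feature, applying \eqref{mp.-matel_var} yields $\midop{L(\delta\xx{\cdot}\tpder{\Acal}{\xx}{+}\delta\pp{\cdot}\tpder{\Acal}{\pp})}$, which simply shifts the Hamilton equations by $\dot\pp \to \dot\pp {+} \tpder{\Acal}{\xx}$ and $\dot\xx \to \dot\xx {-} \tpder{\Acal}{\pp}$ (up to sign bookkeeping against the $\pp{\cdot}\dot\xx$ term). This is a gradient-type (curl-free in the symplectic sense) flux, and I would verify that its phase-space divergence $\tpder{}{\xx}{\cdot}(\ldots){+}\tpder{}{\pp}{\cdot}(\ldots)$ vanishes identically because mixed partials commute, so that inserting it into \eqref{mp.-continuity_equation_for_Q} changes nothing. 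Second, the term $\tilde{\zz}{\cdot}\LLcal{=}\pp{\cdot}\LLcal_{\xx}{-}\xx{\cdot}\LLcal_{\pp}$ built from the antisymmetric features $\GFC_{i,j}$: using $\Lcal_i{=}\sum_j \tpder{\GFC_{i,j}}{z_j}$ and the antisymmetry $\GFC_{i,j}{=}{-}\GFC_{j,i}$, I expect that after integration by parts (licensed by the boundary conditions \eqref{hp.-boundary_conditions_on_var} and the features-evolution equation \eqref{mp.-features-evolution}) this contributes a flux whose divergence again vanishes by antisymmetry, i.e. $\sum_{i,j}\tpder{^2\GFC_{i,j}}{z_i\partial z_j}{=}0$. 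The role of the symplectic matrix $\Omega$ in $\tilde\zz{=}\Omega\zz$ is exactly to map the divergenceless condition in $\zz$-space into a divergenceless flux correction for $\dot\zz$.

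Having identified both corrections as divergenceless, I would close the argument by substituting the modified velocities into the continuity equation \eqref{mp.-continuity_equation_for_Q} and observing that the divergence-free additions drop out, leaving the same transport as in \eqref{hs.-gen_Liouville-Husimi_equation} and therefore the same physical Liouville dynamics \eqref{hs.-gen_Liouville_equation_cl}--\eqref{hs.-gen_Liouville_equation_qn}. The main obstacle I anticipate is the careful bookkeeping in the second (vector-gauge) term: one must correctly apply \eqref{mp.-matel_var} to features that carry explicit time dependence, track the integration-by-parts boundary terms using \eqref{hp.-boundary_conditions_on_var}, and verify that the antisymmetry of $\GFC_{i,j}$ is precisely what forces $\sum_{i,j}\partial_i\partial_j\GFC_{i,j}{=}0$. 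Once that algebraic identity is established, the conclusion that $\LLcal$ generates only divergenceless (hence dynamics-preserving) flux follows, and the extended action reproduces the same generalized Hamilton equations modulo the Skodje gauge.
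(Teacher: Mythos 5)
Your proposal follows essentially the same route as the paper: repeat the variation of Theorem~\ref{*theorem-jhvp} to obtain the extended Hamilton equations with the additional terms $\tpder{\Acal}{\tilde{\zz}}$ and $\LLcal$, then observe that both are divergence-free (the symplectic gradient by equality of mixed partials, $\LLcal$ by the antisymmetry $\GFC_{i,j}{=}{-}\GFC_{j,i}$) so that they cancel in the continuity equation and leave Eqs.~\eqref{hs.-gen_Liouville_equation} intact. You actually make explicit the ``all gauge terms cancel out'' step that the paper states without computation, so the argument is sound and matches the paper's proof.
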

\begin{proof}
Repeating the steps leading to Eqs.~\eqref{hp.-ham_eqs} in the proof of Theorem~\ref{*theorem-jhvp} one obtains

	\begin{gather}\label{hp.-ext_ham_eqs}
	\dot{\zz}\hs{Q}(\pp,\xx){=}\tpder{\hs{\hamgen}}{\tilde{\zz}}{\scprH}\hs{Q}(\pp,\xx){+}\tpder{A}{\tilde{\zz}}{+}\LLcal.
	\end{gather}

These extended Hamilton equations of motion differ from Eqs.~\eqref{hp.-ham_eqs} by the additional last terms containing the scalar and vector Skodje gauges $\Acal$ and $\LLcal$. However, when substituting them into the continuity equation \eqref{mp.-continuity_equation_for_Q}, all the terms containing the gauge potentials cancel out leading to Eqs.~\eqref{hs.-gen_Liouville_equation}, which finishes the proof.
\end{proof}

Note that the gauges $\Acal$ and $\LLcal$ also admit for their own ``internal'' gauge freedom in the sense that for any twice-differentiable real function $f(\pp,\xx,t)$ obeying Eq.~\eqref{mp.-features-evolution} the transformation
\begin{gather}
\Acal\to\Acal{+}f,~~\GFC_{i,i+N}{\to}\GFC_{i,i+N}{-}f,\notag\\
\GFC_{i+N,i}{\to}\GFC_{i+N,i}{+}f~~(i{=}1,...,N)
\end{gather}
does not change the right-hand side of Eq.~\eqref{hp.-ext_ham_eqs}. 

From the perspective of differential geometry, the gauge terms $\GFC_{i,j}$ can be seen as the coefficients of the $(2N{-}2)$-form
\begin{gather}
{\mathfrak T}{=}\sum_{i,j}\GFC_{i,j}{^*T^{i,j}}{=}2\sum_{i>j}\GFC_{i,j}{^*T^{i,j}},
\end{gather} 
where $T^{i,j}$ denotes a 2-vector with components along $i$th and $j$th phase space dimensions and $^*T^{i,j}$ is the $(2N-2)$-form dual to $T^{i,j}$. The Skodje gauge freedom described by theorem~\ref{*theorem-ehvp} follows from the exactness of closed forms $\tilde{\diff}{\mathfrak T}{=}\sum_i\LLcal_i~{^*(\tpder{}{z_i}})$ and $\tilde{\diff}{A}$
\begin{gather}\label{hp.-EHVP_geometric_meaning}
\tilde{\diff}(\tilde{\diff}{\mathfrak T})=0,~~\tilde{\diff}(\tilde{\diff}{A})=0,
\end{gather}
where $\tilde{\diff}$ denotes the external derivation \cite{BOOK-Schutz}.

It is instructive to consider two examples of the Skodje gauge transformations of Husimi flows. Denote by $\kkappa_i$ $(i{=}1,...,2N)$ the set of constant vectors (of an arbitrary length) pointing along each phase space dimension and construct the vector
$\kk(\aa){=}\sum_i a_i\kkappa_i$, where $a_i$ are some real coefficients. Let us also denote $\JJ(\zz){=}\tpder{\hs{\hamgen}}{\tilde{\zz}}{\scprH}\hs{Q}(\pp,\xx)$ for brevity. Consider the following gauges:
\begin{align}
\GFC_{i,j}{=}&\int_0^1\diff\alpha~{\alpha^{2N{-}2}}\big(z_iJ_j(\alpha\zz){-}z_jJ_i(\alpha\zz)\big),\label{hp.-sample_gauge_a}\\
\GFC_{i,j}{=}&\frac1{S_{2N}}\int_0^{\infty}\diff\alpha\int\diff^{2N-1}\SolidAngle\big\{k_j(\SolidAngle)J_i(\zz{+}\alpha\kk(\SolidAngle)){-}\notag\\
&k_i(\SolidAngle)J_j(\zz{+}\alpha\kk(\SolidAngle))\big\},\label{hp.-sample_gauge_b}
\end{align}
where $S_{2N}{=}\frac{{2\pi}^N}{\Gamma(N)}$ is the area of the unit sphere in an $2N$-dimensional space and integration $\int\diff^{2N-1}\SolidAngle$ is performed over the surface of the unit sphere $|\aa|{=}1$. Straightforward algebra shows that the gauge \eqref{hp.-sample_gauge_a} converts Eq.~\eqref{hp.-ext_ham_eqs} into
\begin{gather}\label{hp.-ext_ham_eqs_sample_a}
\dot{\zz}\hs{Q}(\pp,\xx){=}\int_0^1\diff\alpha~\alpha^{2N-1}\zz~\mathrm{div}\JJ(\alpha\zz).
\end{gather}
The respective equation for the gauge \eqref{hp.-sample_gauge_b} reads
\begin{align}
\dot{\zz}\hs{Q}(\pp,\xx){=}&{-}\tfrac1{S_{2N}}\int_0^{\infty}\diff\alpha\int\diff^{2N-1}\SolidAngle\notag\\
&\kk(\SolidAngle)\mathrm{div}\JJ(\zz{+}\alpha\kk(\SolidAngle)).
\label{hp.-ext_ham_eqs_sample_b}
\end{align}
Here we used notation $\mathrm{div}\JJ(\alpha\zz){=}\sum_{i{=}1}^{2N}\pder{J_i(\alpha\zz)}{z_i}$.
In the special case when all $\kkappa_i$ are unit vectors, Eq.~\eqref{hp.-ext_ham_eqs_sample_b} can also be rewritten as 
\begin{gather}
\dot{\zz}\hs{Q}(\pp,\xx){=}{-}\tpder{\tilde U}{\zz},\notag\\
\tilde U{=}\tfrac1{(2N{-}2)S_{2N}}\inftyints\diff^n\zz'\tfrac{\mathrm{div}\JJ(\zz)}{|\zz-\zz'|^{2N-2}}.
\end{gather} 

The right hand sides of Eqs.~\eqref{hp.-ext_ham_eqs_sample_a} and \eqref{hp.-ext_ham_eqs_sample_b} vanish for any steady state. In contrast, the right hand sides of Eqs.~\eqref{hp.-ham_eqs}, representing the case of the ``default'' gauge $\GFC{=}A{=}0$, vanish for only classical (but not for quantum) steady pure states. Thus, the significant difference between the two mechanics appears to be a mere consequence of the gauge convention. Note that vanishing phase space dynamics for steady states is one of the hallmarks of the Madelung theory (Bohmian mechanics). In fact, we have shown in Ref.~\cite{2020-Zhdanov} that the Bohmian mechanics is just another Skodje-gauge-transformed Husimi representation. 

\section{Discussion and conclusion\label{*SEC:DISS}}
In this paper, we extended the Husimi representation of quantum mechanics as the equivalent Hamilton theory of a classical many-body system with non-local interactions, where trajectories of ``particles'' obey an analog of the classical least action principle. The trajectory-based description makes our theory distinct from other phase space formulations of quantum mechanics%
\footnote{Madelung-de Broglie theory (Bohmian mechanics), which also enables trajectory-based description of quantum processes, appears to be a special case of the presented theory restricted to pure states, as shown in Ref.~\cite{2020-Zhdanov},}%
, and puts it in the form as similar to the Hamiltonian mechanics as possible.

Our theory extends the Wigner representation, which is simultaneously the phase space as well as the Hilbert space theory. This very duality (briefly reviewed in Appendix~\ref{*APP01}) enabled the classical-like reformulation of quantum mechanics developed in this paper. Note that one can also go the other way and derive quantum-like reformulation of classical mechanics; the result is known as the Koopman-von Neumann (KvN) theory. The link between the KvN theory and the Wigner representation is explained in Appendix~\ref{*APP01} and Ref.~\cite{2012-Bondar}.

Our variation principle formulation, Eq.~\eqref{hp.-EHVP}, and the KvN theory highlight the fact that the difference between classical and quantum mechanics stems exclusively from the different definitions of pure states. As discussed before Eq.~\eqref{mp.-purity_definition}, pure states in classical mechanics represent point particles described by the $\delta$-functions in the KvN formulation. To  preserve purity, the KvN generator of motion must be a first-order differential operator. The only skew-Hermitian operator satisfying this condition is the Poisson bracket $\cw{H}{\psnprod}$. The quantum-mechanical definition \eqref{mp.-purity_definition} implies that the purity-preserving variations $\delta\qh{Q}$ of $\qh{Q}$ satisfy the equality
\begin{gather}\label{03.-purity_variation_condition}
\delta\qh{Q}{\starH}\qh{Q}{+}\qh{Q}{\starH}\delta\qh{Q}{=}\delta \qh{Q}(2\pi\hbar)^{{-}N}.
\end{gather}
The equality \eqref{03.-purity_variation_condition} matches the general definition of derivation (see Ref.~\cite{BOOK-Nazaikinskii}, Sec.~3, Lemma I.1), implying that $\delta\qh{Q}{\propto}{-}i\left(\qh{H}{\starH}\qh{Q}{-}\qh{Q}{\star}\qh{H}\right)$ for some $\qh{H}(\pp,\xx)$, which matches \eqref{hs.-gen_Liouville-Husimi_equation_exp_qn}.

Despite our theory is formulated for pure states, the resulting Hamilton equations of motion \eqref{hp.-ext_ham_eqs}, as well as the master equations \eqref{hs.-gen_Liouville-Husimi_equation_exp} and \eqref{hs.-gen_Liouville_equation} are linear in the system state $\hs{Q}(\pp,\xx)$ and hence are valid for both pure and mixed states.

Note that the limit $\hbar{\to}0$ in Eq.~\eqref{mp.-purity_definition_qn} is singular, which clearly shows that classical mechanics is an approximation, which is well-defined for small, but still finite values of $\hbar$ supporting the $\starH$-operation with the property \eqref{mp.-purity_definition_qn}. In other words, even in classical mechanics, the dynamics ``inside'' the pure states defined by theorem~\ref{*theorem-jhvp} is non-trivial, but the resolution of this fact is hindered by their collapse to $\delta$-like phase space probability distributions. 

Theorem~\ref{*theorem-jhvp} seems to suggest that, unlike in classical mechanics, the quantum probability fluid does not get still even in the case of stationary pure states. However, theorem~\ref{*theorem-ehvp} shows that this distinction is mere consequence of a specific Skodje gauge choice. We explicitly show two Skodje gauges removing circulation fluxes from arbitrary steady states, the third example (corresponding to Bohmian mechanics) can be found in Ref.~\cite{2020-Zhdanov}. In the same way, one can construct infinitely many other fluid analogies with strikingly different fluxes. Thus, the Skodje gauge freedom embedded into the joint Hamilton variation principle is an unexplored resource for improving modern semiclassical approximations to wavepackets dynamics, such as developed in \cite{2013-Ohsawa}, and advancing numerical methods used in quantum chemistry \cite{2020-Zhdanov,2008-Shalashilin,2012-Hughes,2012-Wang} and many-body physics \cite{2017-Foss-Feig}. It would be also interesting to consider the joint Hamilton variation principle as a starting point for developing quantum-classical hybrid models alike recently proposed in Ref.~\cite{2019-Bondar}. Last but not least, we believe that our results are of a pedagogical interest to remove the conceptual barriers between the classical and quantum worlds.

\acknowledgements The authors thank Francisco Gonzalez Montoya and Tamar Seideman for valuable discussions and useful suggestions on improving the manuscript. D.I.B. is supported by Army Research Office (ARO) (grant W911NF-19-1-0377, program manager Dr.~James Joseph, and cooperative agreement W911NF-21-2-0139).

\appendix
\section{Wigner representation of quantum mechanics as the Hilbert space theory\label{*APP01}}
Here we outline the pedagogical way to reveal the Hilbert space structure underlying the Wigner phase space representation of quantum mechanics with the help of Bopp operators \cite{1956-Bopp}. Left $\lBopp{O}$, and right $\rBopp{O}$ Bopp operators associated with an arbitrary Weyl symbol $\qw{O}$ are defined via relations
\begin{gather}
\lBopp{O}\WF{=}\qw{O}{\star}\WF,~~\rBopp{O}\WF{=}\WF{\star}\qw{O},
\end{gather}
which should be valid for an arbitrary Wigner function $\WF$. Specifically, the Bopp operators associated with canonical position and momentum coordinates read
\begin{subequations}\label{app.-px_Bopp_operators}
\begin{align}
\lBopp{p}_n&{=}p_n{-}\tfrac{i\hbar}2\tpder{}{x_n},&\lBopp{x}_n&{=}x_n{+}\tfrac{i\hbar}2\tpder{}{p_n}, \label{app.-px_Bopp_operators_l}\\
\rBopp{p}_n&{=}p_n{+}\tfrac{i\hbar}2\tpder{}{x_n},&\rBopp{x}_n&{=}x_n{-}\tfrac{i\hbar}2\tpder{}{p_n}. \label{app.-px_Bopp_operators_r}
\end{align}
\end{subequations}

Left Bopp operators \eqref{app.-px_Bopp_operators_l} obey commutation relations similar to canonical commutation relations between position and momentum operators: $[\lBopp{p}_n,\lBopp{x}_n]{=}{-}i\hbar$, $[\lBopp{p}_n,\lBopp{p}_m]{=}0$ etc. Right Bopp operators \eqref{app.-px_Bopp_operators_r} satisfy the alike relations, which only differ by complex conjugation, e.g., $[\rBopp{p}_n,\rBopp{x}_n]{=}i\hbar$. Furthermore, left and right Bopp operators commute. Because of these remarkable properties, left (right) Bopp operators associated with any quantum observable $\hat O{=}O(\hat{\pp}, \hat{\xx})$ can be found simply by replacing $\hat{\pp}$ and $\hat{\xx}$ with $\lBopp{\pp}$ and $\lBopp{\xx}$ ($\rBopp{\pp}$ and $\rBopp{\xx}$) in $O(\hat{\pp}, \hat{\xx})$. 

Note that Bopp operators \eqref{app.-px_Bopp_operators} resemble linear combinations of coordinate and momentum operators with real coefficients for an effective system of $2N$ particles, where one half of operators is written in the coordinate representation and another half in the momentum representation. Hence, any Bopp operator can also be interpreted as a Hermitian operator describing this effective system. The corresponding $2N$-dimensional Hilbert space is exactly the Hilbert space of Wigner representation. 

One can further leverage the analogy with the effective system and define ``wavefunctions'' $\Psi(\pp,\xx)$ on the just-defined Hilbert space, which inner product is $\scpr{\Psi_1}{\Psi_2}{=}\inftyints\diff\pp\diff\xx\Psi_1^*\Psi_2$. It is possible to show that in the case of quantum pure states the Wigner function $\WF$ coincides with $\Psi(\pp,\xx)$ up to scaling factor ($\Psi(\pp,\xx){=}(2\pi\hbar)^{\dimensionality/2}\WF$) \cite{2013-Bondar}. The Bopp representation
\begin{gather}\label{app.-quantum_Liouville_equation}
\tpder{}{t}\Psi{=}{-}\tfrac{i}{\hbar}(\lBopp{H}{-}\rBopp{H})\Psi
\end{gather}
of quantum Liouville equation \eqref{hs.-gen_Liouville_equation_qn} has the mathematical structure of Schr\"odinger equation for the effective system, where the operator $\lBopp{H}{-}\rBopp{H}$ plays the role of effective Hamiltonian. However, in contrast to ordinary Schr\"odinger equation, the classical limit of the generator of motion in Eq.~\eqref{app.-quantum_Liouville_equation} is well-defined. Namely,  $\left.{-}\tfrac{i}{\hbar}(\lBopp{H}-\rBopp{H})\right|_{\hbar{\to}0}{=}\cw{H}{\psnprod}$, the classical Poisson bracket, which is the generator of motion in the KvN Hilbert space formulation of classical mechanics. Respectively, the classical limit $\left.\Psi\right|_{\hbar{\to}0}$ of effective wavefunctions in \eqref{app.-quantum_Liouville_equation} is the KvN wavefunction of classical state \cite{2012-Bondar}.

\bibliography{Hamiltonian_fluid_analogies}

\begin{thebibliography}{42}%
\makeatletter
\providecommand \@ifxundefined [1]{%
 \@ifx{#1\undefined}
}%
\providecommand \@ifnum [1]{%
 \ifnum #1\expandafter \@firstoftwo
 \else \expandafter \@secondoftwo
 \fi
}%
\providecommand \@ifx [1]{%
 \ifx #1\expandafter \@firstoftwo
 \else \expandafter \@secondoftwo
 \fi
}%
\providecommand \natexlab [1]{#1}%
\providecommand \enquote  [1]{``#1''}%
\providecommand \bibnamefont  [1]{#1}%
\providecommand \bibfnamefont [1]{#1}%
\providecommand \citenamefont [1]{#1}%
\providecommand \href@noop [0]{\@secondoftwo}%
\providecommand \href [0]{\begingroup \@sanitize@url \@href}%
\providecommand \@href[1]{\@@startlink{#1}\@@href}%
\providecommand \@@href[1]{\endgroup#1\@@endlink}%
\providecommand \@sanitize@url [0]{\catcode `\\12\catcode `\$12\catcode
  `\&12\catcode `\#12\catcode `\^12\catcode `\_12\catcode `\%12\relax}%
\providecommand \@@startlink[1]{}%
\providecommand \@@endlink[0]{}%
\providecommand \url  [0]{\begingroup\@sanitize@url \@url }%
\providecommand \@url [1]{\endgroup\@href {#1}{\urlprefix }}%
\providecommand \urlprefix  [0]{URL }%
\providecommand \Eprint [0]{\href }%
\providecommand \doibase [0]{https://doi.org/}%
\providecommand \selectlanguage [0]{\@gobble}%
\providecommand \bibinfo  [0]{\@secondoftwo}%
\providecommand \bibfield  [0]{\@secondoftwo}%
\providecommand \translation [1]{[#1]}%
\providecommand \BibitemOpen [0]{}%
\providecommand \bibitemStop [0]{}%
\providecommand \bibitemNoStop [0]{.\EOS\space}%
\providecommand \EOS [0]{\spacefactor3000\relax}%
\providecommand \BibitemShut  [1]{\csname bibitem#1\endcsname}%
\let\auto@bib@innerbib\@empty
\bibitem [{\citenamefont {Landau}\ and\ \citenamefont
  {Lifshitz}(2013)}]{BOOK-Landau}%
  \BibitemOpen
  \bibfield  {author} {\bibinfo {author} {\bibfnamefont {L.~D.}\ \bibnamefont
  {Landau}}\ and\ \bibinfo {author} {\bibfnamefont {E.~M.}\ \bibnamefont
  {Lifshitz}},\ }\href@noop {} {\emph {\bibinfo {title} {Mechanics and
  electrodynamics}}}\ (\bibinfo  {publisher} {Elsevier},\ \bibinfo {year}
  {2013})\BibitemShut {NoStop}%
\bibitem [{\citenamefont {Frenkel}(1934)}]{1934-Frenkel}%
  \BibitemOpen
  \bibfield  {author} {\bibinfo {author} {\bibfnamefont {J.}~\bibnamefont
  {Frenkel}},\ }\href@noop {} {\emph {\bibinfo {title} {Wave mechanics;
  advanced general theory,}}}\ (\bibinfo  {publisher} {Clarendon Press},\
  \bibinfo {year} {1934})\BibitemShut {NoStop}%
\bibitem [{\citenamefont {Beck}(2000)}]{2000-Beck}%
  \BibitemOpen
  \bibfield  {author} {\bibinfo {author} {\bibfnamefont {M.}~\bibnamefont
  {Beck}},\ }\bibfield  {title} {\bibinfo {title} {{The multiconfiguration
  time-dependent Hartree (MCTDH) method: a highly efficient algorithm for
  propagating wavepackets}},\ }\href
  {https://doi.org/10.1016/S0370-1573(99)00047-2} {\bibfield  {journal}
  {\bibinfo  {journal} {Phys. Rep.}\ }\textbf {\bibinfo {volume} {324}},\
  \bibinfo {pages} {1–105} (\bibinfo {year} {2000})}\BibitemShut {NoStop}%
\bibitem [{\citenamefont {Benedikter}\ \emph {et~al.}(2018)\citenamefont
  {Benedikter}, \citenamefont {Sok},\ and\ \citenamefont
  {Solovej}}]{2018-Benedikter}%
  \BibitemOpen
  \bibfield  {author} {\bibinfo {author} {\bibfnamefont {N.}~\bibnamefont
  {Benedikter}}, \bibinfo {author} {\bibfnamefont {J.}~\bibnamefont {Sok}},\
  and\ \bibinfo {author} {\bibfnamefont {J.~P.}\ \bibnamefont {Solovej}},\
  }\bibfield  {title} {\bibinfo {title} {{The Dirac–Frenkel principle for
  reduced density matrices, and the Bogoliubov–de Gennes equations}},\ }\href
  {https://doi.org/10.1007/s00023-018-0644-z} {\bibfield  {journal} {\bibinfo
  {journal} {Ann. Henri Poincare}\ }\textbf {\bibinfo {volume} {19}},\ \bibinfo
  {pages} {1167–1214} (\bibinfo {year} {2018})}\BibitemShut {NoStop}%
\bibitem [{\citenamefont {Lubich}(2004)}]{2004-Lubich}%
  \BibitemOpen
  \bibfield  {author} {\bibinfo {author} {\bibfnamefont {C.}~\bibnamefont
  {Lubich}},\ }\bibfield  {title} {\bibinfo {title} {On variational
  approximations in quantum molecular dynamics},\ }\href
  {https://doi.org/10.1090/S0025-5718-04-01685-0} {\bibfield  {journal}
  {\bibinfo  {journal} {Math. Comput.}\ }\textbf {\bibinfo {volume} {74}},\
  \bibinfo {pages} {765–780} (\bibinfo {year} {2004})}\BibitemShut {NoStop}%
\bibitem [{\citenamefont {Militzer}\ and\ \citenamefont
  {Pollock}(2000)}]{2000-Militzer}%
  \BibitemOpen
  \bibfield  {author} {\bibinfo {author} {\bibfnamefont {B.}~\bibnamefont
  {Militzer}}\ and\ \bibinfo {author} {\bibfnamefont {E.~L.}\ \bibnamefont
  {Pollock}},\ }\bibfield  {title} {\bibinfo {title} {Variational density
  matrix method for warm condensed matter and application to dense hydrogen},\
  }\href {https://doi.org/10.1103/PhysRevE.61.3470} {\bibfield  {journal}
  {\bibinfo  {journal} {Phys. Rev. E}\ }\textbf {\bibinfo {volume} {61}},\
  \bibinfo {pages} {3470–3482} (\bibinfo {year} {2000})}\BibitemShut
  {NoStop}%
\bibitem [{\citenamefont {Feynman}(2005)}]{2005-Feynman}%
  \BibitemOpen
  \bibfield  {author} {\bibinfo {author} {\bibfnamefont {R.~P.}\ \bibnamefont
  {Feynman}},\ }\bibinfo {title} {Space-time approach to non-relativistic
  quantum mechanics},\ in\ \href {https://doi.org/10.1142/9789812567635_0002}
  {\emph {\bibinfo {booktitle} {Feynman’s Thesis -- A New Approach to Quantum
  Theory}}}\ (\bibinfo  {publisher} {World Scientific},\ \bibinfo {year}
  {2005})\ p.\ \bibinfo {pages} {71–109}\BibitemShut {NoStop}%
\bibitem [{\citenamefont {Milton}(2015)}]{BOOK-Milton}%
  \BibitemOpen
  \bibfield  {author} {\bibinfo {author} {\bibfnamefont {K.~A.}\ \bibnamefont
  {Milton}},\ }\href {https://arxiv.org/pdf/1503.08091.pdf} {\emph {\bibinfo
  {title} {{Schwinger's quantum action principle: from Dirac’s formulation
  through Feynman’s path integrals, the Schwinger-Keldysh method, quantum
  field theory, to source theory}}}}\ (\bibinfo  {publisher} {Springer},\
  \bibinfo {year} {2015})\BibitemShut {NoStop}%
\bibitem [{\citenamefont {Thoss}\ and\ \citenamefont
  {Englert}(1996)}]{1996-Thoss}%
  \BibitemOpen
  \bibfield  {author} {\bibinfo {author} {\bibfnamefont {M.}~\bibnamefont
  {Thoss}}\ and\ \bibinfo {author} {\bibfnamefont {B.-G.}\ \bibnamefont
  {Englert}},\ }\bibfield  {title} {\bibinfo {title} {A quantum action
  principle for open systems},\ }\href {https://doi.org/10.1007/BF00343193}
  {\bibfield  {journal} {\bibinfo  {journal} {Lett. Math. Phys.}\ }\textbf
  {\bibinfo {volume} {37}},\ \bibinfo {pages} {293–308} (\bibinfo {year}
  {1996})}\BibitemShut {NoStop}%
\bibitem [{\citenamefont {Berm'udez~Manjarres}(2021)}]{2021-Manjarres}%
  \BibitemOpen
  \bibfield  {author} {\bibinfo {author} {\bibfnamefont {A.~D.}\ \bibnamefont
  {Berm'udez~Manjarres}},\ }\bibfield  {title} {\bibinfo {title} {The schwinger
  action principle for classical systems},\ }\href
  {https://doi.org/10.1088/1751-8121/ac2321} {\bibfield  {journal} {\bibinfo
  {journal} {J. Phys. A: Math. Theor.}\ }\textbf {\bibinfo {volume} {54}},\
  \bibinfo {pages} {414003} (\bibinfo {year} {2021})}\BibitemShut {NoStop}%
\bibitem [{\citenamefont {Voth}(1993)}]{1993-Voth}%
  \BibitemOpen
  \bibfield  {author} {\bibinfo {author} {\bibfnamefont {G.~A.}\ \bibnamefont
  {Voth}},\ }\bibfield  {title} {\bibinfo {title} {Feynman path integral
  formulation of quantum mechanical transition-state theory},\ }\href@noop {}
  {\bibfield  {journal} {\bibinfo  {journal} {The Journal of Physical
  Chemistry}\ }\textbf {\bibinfo {volume} {97}},\ \bibinfo {pages} {8365--8377}
  (\bibinfo {year} {1993})}\BibitemShut {NoStop}%
\bibitem [{\citenamefont {Storey}\ and\ \citenamefont
  {Cohen-Tannoudji}(1994)}]{1994-Storey}%
  \BibitemOpen
  \bibfield  {author} {\bibinfo {author} {\bibfnamefont {P.}~\bibnamefont
  {Storey}}\ and\ \bibinfo {author} {\bibfnamefont {C.}~\bibnamefont
  {Cohen-Tannoudji}},\ }\bibfield  {title} {\bibinfo {title} {{The Feynman path
  integral approach to atomic interferometry. A tutorial}},\ }\href
  {https://doi.org/10.1051/jp2:1994103} {\bibfield  {journal} {\bibinfo
  {journal} {J. Phys. II}\ }\textbf {\bibinfo {volume} {4}},\ \bibinfo {pages}
  {1999–2027} (\bibinfo {year} {1994})}\BibitemShut {NoStop}%
\bibitem [{\citenamefont {Gerry}\ and\ \citenamefont
  {Singh}(1979)}]{1979-Gerry}%
  \BibitemOpen
  \bibfield  {author} {\bibinfo {author} {\bibfnamefont {C.~C.}\ \bibnamefont
  {Gerry}}\ and\ \bibinfo {author} {\bibfnamefont {V.~A.}\ \bibnamefont
  {Singh}},\ }\bibfield  {title} {\bibinfo {title} {{Feynman path-integral
  approach to the Aharonov-Bohm effect}},\ }\href
  {https://doi.org/10.1103/PhysRevD.20.2550} {\bibfield  {journal} {\bibinfo
  {journal} {Phys. Rev. D}\ }\textbf {\bibinfo {volume} {20}},\ \bibinfo
  {pages} {2550–2554} (\bibinfo {year} {1979})}\BibitemShut {NoStop}%
\bibitem [{\citenamefont {Tempere}\ \emph {et~al.}(2009)\citenamefont
  {Tempere}, \citenamefont {Casteels}, \citenamefont {Oberthaler},
  \citenamefont {Timmermans},\ and\ \citenamefont {Devreese}}]{2009-Tempere}%
  \BibitemOpen
  \bibfield  {author} {\bibinfo {author} {\bibfnamefont {J.}~\bibnamefont
  {Tempere}}, \bibinfo {author} {\bibfnamefont {W.}~\bibnamefont {Casteels}},
  \bibinfo {author} {\bibfnamefont {M.~K.}\ \bibnamefont {Oberthaler}},
  \bibinfo {author} {\bibfnamefont {E.}~\bibnamefont {Timmermans}},\ and\
  \bibinfo {author} {\bibfnamefont {J.~T.}\ \bibnamefont {Devreese}},\
  }\bibfield  {title} {\bibinfo {title} {{Feynman path-integral treatment of
  the BEC-impurity polaron}},\ }\href
  {https://doi.org/10.1103/PhysRevB.80.184504} {\bibfield  {journal} {\bibinfo
  {journal} {Phys. Rev. B}\ }\textbf {\bibinfo {volume} {80}},\ \bibinfo
  {pages} {184504} (\bibinfo {year} {2009})}\BibitemShut {NoStop}%
\bibitem [{\citenamefont {Hamber}(2008)}]{2008-Hamber}%
  \BibitemOpen
  \bibfield  {author} {\bibinfo {author} {\bibfnamefont {H.~W.}\ \bibnamefont
  {Hamber}},\ }\href@noop {} {\emph {\bibinfo {title} {{Quantum gravitation:
  The Feynman path integral approach}}}}\ (\bibinfo  {publisher} {Springer
  Science and Business Media},\ \bibinfo {year} {2008})\BibitemShut {NoStop}%
\bibitem [{\citenamefont {Skodje}\ \emph {et~al.}(1989)\citenamefont {Skodje},
  \citenamefont {Rohrs},\ and\ \citenamefont {VanBuskirk}}]{1989-Skodje}%
  \BibitemOpen
  \bibfield  {author} {\bibinfo {author} {\bibfnamefont {R.~T.}\ \bibnamefont
  {Skodje}}, \bibinfo {author} {\bibfnamefont {H.~W.}\ \bibnamefont {Rohrs}},\
  and\ \bibinfo {author} {\bibfnamefont {J.}~\bibnamefont {VanBuskirk}},\
  }\bibfield  {title} {\bibinfo {title} {Flux analysis, the correspondence
  principle, and the structure of quantum phase space},\ }\href
  {https://doi.org/10.1103/PhysRevA.40.2894} {\bibfield  {journal} {\bibinfo
  {journal} {Phys. Rev. A}\ }\textbf {\bibinfo {volume} {40}},\ \bibinfo
  {pages} {2894–2916} (\bibinfo {year} {1989})}\BibitemShut {NoStop}%
\bibitem [{\citenamefont {Shirokov}(1979)}]{1979-Shirokov}%
  \BibitemOpen
  \bibfield  {author} {\bibinfo {author} {\bibfnamefont {Y.~M.}\ \bibnamefont
  {Shirokov}},\ }\bibfield  {title} {\bibinfo {title} {Quantum and classical
  mechanics in the phase space representation},\ }\href@noop {} {\bibfield
  {journal} {\bibinfo  {journal} {Sov. J. Part. Nucl.}\ }\textbf {\bibinfo
  {volume} {10}},\ \bibinfo {pages} {5–50} (\bibinfo {year}
  {1979})}\BibitemShut {NoStop}%
\bibitem [{\citenamefont {Heller}(1975)}]{1975-Heller}%
  \BibitemOpen
  \bibfield  {author} {\bibinfo {author} {\bibfnamefont {E.~J.}\ \bibnamefont
  {Heller}},\ }\bibfield  {title} {\bibinfo {title} {Time-dependent approach to
  semiclassical dynamics},\ }\href {https://doi.org/10.1063/1.430620}
  {\bibfield  {journal} {\bibinfo  {journal} {J. Chem. Phys.}\ }\textbf
  {\bibinfo {volume} {62}},\ \bibinfo {pages} {1544} (\bibinfo {year}
  {1975})}\BibitemShut {NoStop}%
\bibitem [{\citenamefont {Heller}(1981)}]{1981-Heller}%
  \BibitemOpen
  \bibfield  {author} {\bibinfo {author} {\bibfnamefont {E.~J.}\ \bibnamefont
  {Heller}},\ }\bibfield  {title} {\bibinfo {title} {Frozen gaussians: A very
  simple semiclassical approximation},\ }\href
  {https://doi.org/10.1063/1.442382} {\bibfield  {journal} {\bibinfo  {journal}
  {J. Chem. Phys.}\ }\textbf {\bibinfo {volume} {75}},\ \bibinfo {pages} {2923}
  (\bibinfo {year} {1981})}\BibitemShut {NoStop}%
\bibitem [{\citenamefont {Cohen}(1966)}]{1966-Cohen}%
  \BibitemOpen
  \bibfield  {author} {\bibinfo {author} {\bibfnamefont {L.}~\bibnamefont
  {Cohen}},\ }\bibfield  {title} {\bibinfo {title} {Generalized phase-space
  distribution functions},\ }\href {https://doi.org/10.1063/1.1931206}
  {\bibfield  {journal} {\bibinfo  {journal} {J. Math. Phys.}\ }\textbf
  {\bibinfo {volume} {7}},\ \bibinfo {pages} {781–786} (\bibinfo {year}
  {1966})}\BibitemShut {NoStop}%
\bibitem [{\citenamefont {Wigner}(1932)}]{1932-Wigner}%
  \BibitemOpen
  \bibfield  {author} {\bibinfo {author} {\bibfnamefont {E.}~\bibnamefont
  {Wigner}},\ }\bibfield  {title} {\bibinfo {title} {On the quantum correction
  for thermodynamic equilibrium},\ }\href
  {https://doi.org/10.1103/PhysRev.40.749} {\bibfield  {journal} {\bibinfo
  {journal} {Phys. Rev.}\ }\textbf {\bibinfo {volume} {40}},\ \bibinfo {pages}
  {749} (\bibinfo {year} {1932})}\BibitemShut {NoStop}%
\bibitem [{\citenamefont {Zachos}\ \emph {et~al.}(2005)\citenamefont {Zachos},
  \citenamefont {Fairlie},\ and\ \citenamefont {Curtright}}]{BOOK-Zachos}%
  \BibitemOpen
  \bibfield  {author} {\bibinfo {author} {\bibfnamefont {C.~K.}\ \bibnamefont
  {Zachos}}, \bibinfo {author} {\bibfnamefont {D.~B.}\ \bibnamefont
  {Fairlie}},\ and\ \bibinfo {author} {\bibfnamefont {T.~L.}\ \bibnamefont
  {Curtright}},\ }\href {https://doi.org/10.1142/5287} {\emph {\bibinfo {title}
  {Quantum Mechanics in Phase Space.~An Overview with Selected Papers}}}\
  (\bibinfo  {publisher} {World Scientific},\ \bibinfo {year}
  {2005})\BibitemShut {NoStop}%
\bibitem [{\citenamefont {Cohen}(1989)}]{1989-Cohen}%
  \BibitemOpen
  \bibfield  {author} {\bibinfo {author} {\bibfnamefont {L.}~\bibnamefont
  {Cohen}},\ }\bibfield  {title} {\bibinfo {title} {{Time-frequency
  distributions -- A review}},\ }\href {https://doi.org/10.1109/5.30749}
  {\bibfield  {journal} {\bibinfo  {journal} {Proc. IEEE}\ }\textbf {\bibinfo
  {volume} {77}},\ \bibinfo {pages} {941–981} (\bibinfo {year}
  {1989})}\BibitemShut {NoStop}%
\bibitem [{\citenamefont {Oliva}\ \emph {et~al.}(2018)\citenamefont {Oliva},
  \citenamefont {Kakofengitis},\ and\ \citenamefont
  {Steuernagel}}]{2018-Oliva}%
  \BibitemOpen
  \bibfield  {author} {\bibinfo {author} {\bibfnamefont {M.}~\bibnamefont
  {Oliva}}, \bibinfo {author} {\bibfnamefont {D.}~\bibnamefont
  {Kakofengitis}},\ and\ \bibinfo {author} {\bibfnamefont {O.}~\bibnamefont
  {Steuernagel}},\ }\bibfield  {title} {\bibinfo {title} {Anharmonic quantum
  mechanical systems do not feature phase space trajectories},\ }\href
  {https://doi.org/10.1016/j.physa.2017.10.047} {\bibfield  {journal} {\bibinfo
   {journal} {Physica A}\ }\textbf {\bibinfo {volume} {502}},\ \bibinfo {pages}
  {201--210} (\bibinfo {year} {2018})}\BibitemShut {NoStop}%
\bibitem [{\citenamefont {Husimi}(1940)}]{1940-Husimi}%
  \BibitemOpen
  \bibfield  {author} {\bibinfo {author} {\bibfnamefont {K.}~\bibnamefont
  {Husimi}},\ }\bibfield  {title} {\bibinfo {title} {Some formal properties of
  the density matrix},\ }\href {https://doi.org/10.11429/ppmsj1919.22.4_264}
  {\bibfield  {journal} {\bibinfo  {journal} {Proc. Phys. Math. Soc. Jpn.}\
  }\textbf {\bibinfo {volume} {22}},\ \bibinfo {pages} {264--314} (\bibinfo
  {year} {1940})}\BibitemShut {NoStop}%
\bibitem [{\citenamefont {Ulmer}\ and\ \citenamefont
  {Kaissl}(2003)}]{2003-Ulmer}%
  \BibitemOpen
  \bibfield  {author} {\bibinfo {author} {\bibfnamefont {W.}~\bibnamefont
  {Ulmer}}\ and\ \bibinfo {author} {\bibfnamefont {W.}~\bibnamefont {Kaissl}},\
  }\bibfield  {title} {\bibinfo {title} {The inverse problem of a gaussian
  convolution and its application to the finite size of the measurement
  chambers/detectors in photon and proton dosimetry},\ }\href
  {https://doi.org/10.1088/0031-9155/48/6/302} {\bibfield  {journal} {\bibinfo
  {journal} {Phys. Med. Biol.}\ }\textbf {\bibinfo {volume} {48}},\ \bibinfo
  {pages} {707--727} (\bibinfo {year} {2003})}\BibitemShut {NoStop}%
\bibitem [{\citenamefont {Polkovnikov}(2010)}]{2010-Polkovnikov}%
  \BibitemOpen
  \bibfield  {author} {\bibinfo {author} {\bibfnamefont {A.}~\bibnamefont
  {Polkovnikov}},\ }\bibfield  {title} {\bibinfo {title} {Phase space
  representation of quantum dynamics},\ }\href
  {https://doi.org/10.1016/j.aop.2010.02.006} {\bibfield  {journal} {\bibinfo
  {journal} {Ann. Phys. (New York)}\ }\textbf {\bibinfo {volume} {325}},\
  \bibinfo {pages} {1790–1852} (\bibinfo {year} {2010})}\BibitemShut
  {NoStop}%
\bibitem [{\citenamefont {Groenewold}(1946)}]{1946-Groenewold}%
  \BibitemOpen
  \bibfield  {author} {\bibinfo {author} {\bibfnamefont {H.~J.}\ \bibnamefont
  {Groenewold}},\ }\bibfield  {title} {\bibinfo {title} {On the principles of
  elementary quantum mechanics},\ }\href
  {https://doi.org/10.1016/S0031-8914(46)80059-4} {\bibfield  {journal}
  {\bibinfo  {journal} {Physica}\ }\textbf {\bibinfo {volume} {12}},\ \bibinfo
  {pages} {405–460} (\bibinfo {year} {1946})}\BibitemShut {NoStop}%
\bibitem [{\citenamefont {Veronez}\ and\ \citenamefont
  {Aguiar}(2013)}]{2013-Veronez}%
  \BibitemOpen
  \bibfield  {author} {\bibinfo {author} {\bibfnamefont {M.}~\bibnamefont
  {Veronez}}\ and\ \bibinfo {author} {\bibfnamefont {M.~A. M.~d.}\ \bibnamefont
  {Aguiar}},\ }\bibfield  {title} {\bibinfo {title} {Phase space flow in the
  husimi representation},\ }\href
  {https://doi.org/10.1088/1751-8113/46/48/485304} {\bibfield  {journal}
  {\bibinfo  {journal} {J. Phys. A: Math. Theor.}\ }\textbf {\bibinfo {volume}
  {46}},\ \bibinfo {pages} {485304} (\bibinfo {year} {2013})}\BibitemShut
  {NoStop}%
\bibitem [{\citenamefont {Salmon}(1988)}]{1988-Salmon}%
  \BibitemOpen
  \bibfield  {author} {\bibinfo {author} {\bibfnamefont {R.}~\bibnamefont
  {Salmon}},\ }\bibfield  {title} {\bibinfo {title} {Hamiltonian fluid
  mechanics},\ }\href {https://doi.org/10.1146/annurev.fl.20.010188.001301}
  {\bibfield  {journal} {\bibinfo  {journal} {Annu. Rev. Fluid Mech.}\ }\textbf
  {\bibinfo {volume} {20}},\ \bibinfo {pages} {225–256} (\bibinfo {year}
  {1988})}\BibitemShut {NoStop}%
\bibitem [{\citenamefont {Schutz}\ and\ \citenamefont
  {Schutz}(1980)}]{BOOK-Schutz}%
  \BibitemOpen
  \bibfield  {author} {\bibinfo {author} {\bibfnamefont {B.}~\bibnamefont
  {Schutz}}\ and\ \bibinfo {author} {\bibfnamefont {D.}~\bibnamefont
  {Schutz}},\ }\href {https://books.google.com/books?id=HAPMB2e643kC} {\emph
  {\bibinfo {title} {Geometrical Methods of Mathematical Physics}}},\
  Geometrical Methods of Mathematical Physics\ (\bibinfo  {publisher}
  {Cambridge University Press},\ \bibinfo {year} {1980})\BibitemShut {NoStop}%
\bibitem [{\citenamefont {Zhdanov}\ and\ \citenamefont
  {Bondar}(2020)}]{2020-Zhdanov}%
  \BibitemOpen
  \bibfield  {author} {\bibinfo {author} {\bibfnamefont {D.~V.}\ \bibnamefont
  {Zhdanov}}\ and\ \bibinfo {author} {\bibfnamefont {D.~I.}\ \bibnamefont
  {Bondar}},\ }\bibfield  {title} {\bibinfo {title} {Quantum and semiclassical
  dynamics as fluid theories where gauge matters},\ }\href
  {http://arxiv.org/abs/2007.14691} {\bibfield  {journal} {\bibinfo  {journal}
  {arXiv:2007.14691 [quant-ph]}\ } (\bibinfo {year} {2020})}\BibitemShut
  {NoStop}%
\bibitem [{\citenamefont {Bondar}\ \emph {et~al.}(2012)\citenamefont {Bondar},
  \citenamefont {Cabrera}, \citenamefont {Lompay}, \citenamefont {Ivanov},\
  and\ \citenamefont {Rabitz}}]{2012-Bondar}%
  \BibitemOpen
  \bibfield  {author} {\bibinfo {author} {\bibfnamefont {D.}~\bibnamefont
  {Bondar}}, \bibinfo {author} {\bibfnamefont {R.}~\bibnamefont {Cabrera}},
  \bibinfo {author} {\bibfnamefont {R.}~\bibnamefont {Lompay}}, \bibinfo
  {author} {\bibfnamefont {M.}~\bibnamefont {Ivanov}},\ and\ \bibinfo {author}
  {\bibfnamefont {H.}~\bibnamefont {Rabitz}},\ }\bibfield  {title} {\bibinfo
  {title} {Operational dynamic modeling transcending quantum and classical
  mechanics},\ }\href {https://doi.org/10.1103/PhysRevLett.109.190403}
  {\bibfield  {journal} {\bibinfo  {journal} {Phys. Rev. Lett.}\ }\textbf
  {\bibinfo {volume} {109}},\ \bibinfo {pages} {190403} (\bibinfo {year}
  {2012})}\BibitemShut {NoStop}%
\bibitem [{\citenamefont {Nazaikinskii}\ \emph {et~al.}(2011)\citenamefont
  {Nazaikinskii}, \citenamefont {Shatalov},\ and\ \citenamefont
  {Sternin}}]{BOOK-Nazaikinskii}%
  \BibitemOpen
  \bibfield  {author} {\bibinfo {author} {\bibfnamefont {V.~E.}\ \bibnamefont
  {Nazaikinskii}}, \bibinfo {author} {\bibfnamefont {V.~E.}\ \bibnamefont
  {Shatalov}},\ and\ \bibinfo {author} {\bibfnamefont {B.~Y.}\ \bibnamefont
  {Sternin}},\ }\href {https://www.degruyter.com/view/product/144592} {\emph
  {\bibinfo {title} {Methods of noncommutative analysis: theory and
  applications}}},\ Vol.~\bibinfo {volume} {22}\ (\bibinfo  {publisher} {Walter
  de Gruyter},\ \bibinfo {year} {2011})\BibitemShut {NoStop}%
\bibitem [{\citenamefont {Ohsawa}\ and\ \citenamefont
  {Leok}(2013)}]{2013-Ohsawa}%
  \BibitemOpen
  \bibfield  {author} {\bibinfo {author} {\bibfnamefont {T.}~\bibnamefont
  {Ohsawa}}\ and\ \bibinfo {author} {\bibfnamefont {M.}~\bibnamefont {Leok}},\
  }\bibfield  {title} {\bibinfo {title} {Symplectic semiclassical wave packet
  dynamics},\ }\href@noop {} {\bibfield  {journal} {\bibinfo  {journal}
  {Journal of Physics A: Mathematical and Theoretical}\ }\textbf {\bibinfo
  {volume} {46}},\ \bibinfo {pages} {405201} (\bibinfo {year}
  {2013})}\BibitemShut {NoStop}%
\bibitem [{\citenamefont {Shalashilin}\ and\ \citenamefont
  {Burghardt}(2008)}]{2008-Shalashilin}%
  \BibitemOpen
  \bibfield  {author} {\bibinfo {author} {\bibfnamefont {D.~V.}\ \bibnamefont
  {Shalashilin}}\ and\ \bibinfo {author} {\bibfnamefont {I.}~\bibnamefont
  {Burghardt}},\ }\bibfield  {title} {\bibinfo {title} {Gaussian-based
  techniques for quantum propagation from the time-dependent variational
  principle: Formulation in terms of trajectories of coupled classical and
  quantum variables},\ }\href {https://doi.org/10.1063/1.2969101} {\bibfield
  {journal} {\bibinfo  {journal} {J. Chem. Phys.}\ }\textbf {\bibinfo {volume}
  {129}},\ \bibinfo {pages} {084104} (\bibinfo {year} {2008})}\BibitemShut
  {NoStop}%
\bibitem [{\citenamefont {Hughes}\ \emph {et~al.}(2012)\citenamefont {Hughes},
  \citenamefont {Baxter}, \citenamefont {Bousquet}, \citenamefont
  {Ramanathan},\ and\ \citenamefont {Burghardt}}]{2012-Hughes}%
  \BibitemOpen
  \bibfield  {author} {\bibinfo {author} {\bibfnamefont {K.~H.}\ \bibnamefont
  {Hughes}}, \bibinfo {author} {\bibfnamefont {S.~N.}\ \bibnamefont {Baxter}},
  \bibinfo {author} {\bibfnamefont {D.}~\bibnamefont {Bousquet}}, \bibinfo
  {author} {\bibfnamefont {P.}~\bibnamefont {Ramanathan}},\ and\ \bibinfo
  {author} {\bibfnamefont {I.}~\bibnamefont {Burghardt}},\ }\bibfield  {title}
  {\bibinfo {title} {Extended hydrodynamic approach to quantum-classical
  nonequilibrium evolution. ii. application to nonpolar solvation},\ }\href
  {https://doi.org/10.1063/1.3671378} {\bibfield  {journal} {\bibinfo
  {journal} {J. Chem. Phys.}\ }\textbf {\bibinfo {volume} {136}},\ \bibinfo
  {pages} {014102} (\bibinfo {year} {2012})}\BibitemShut {NoStop}%
\bibitem [{\citenamefont {Wang}\ \emph {et~al.}(2012)\citenamefont {Wang},
  \citenamefont {Martens},\ and\ \citenamefont {Zheng}}]{2012-Wang}%
  \BibitemOpen
  \bibfield  {author} {\bibinfo {author} {\bibfnamefont {L.}~\bibnamefont
  {Wang}}, \bibinfo {author} {\bibfnamefont {C.~C.}\ \bibnamefont {Martens}},\
  and\ \bibinfo {author} {\bibfnamefont {Y.}~\bibnamefont {Zheng}},\ }\bibfield
   {title} {\bibinfo {title} {Entangled trajectory molecular dynamics in
  multidimensional systems: Two-dimensional quantum tunneling through the
  eckart barrier},\ }\href {https://doi.org/10.1063/1.4736559} {\bibfield
  {journal} {\bibinfo  {journal} {J. Chem. Phys.}\ }\textbf {\bibinfo {volume}
  {137}},\ \bibinfo {pages} {034113} (\bibinfo {year} {2012})}\BibitemShut
  {NoStop}%
\bibitem [{\citenamefont {Foss-Feig}\ \emph {et~al.}(2017)\citenamefont
  {Foss-Feig}, \citenamefont {Niroula}, \citenamefont {Young}, \citenamefont
  {Hafezi}, \citenamefont {Gorshkov}, \citenamefont {Wilson},\ and\
  \citenamefont {Maghrebi}}]{2017-Foss-Feig}%
  \BibitemOpen
  \bibfield  {author} {\bibinfo {author} {\bibfnamefont {M.}~\bibnamefont
  {Foss-Feig}}, \bibinfo {author} {\bibfnamefont {P.}~\bibnamefont {Niroula}},
  \bibinfo {author} {\bibfnamefont {J.~T.}\ \bibnamefont {Young}}, \bibinfo
  {author} {\bibfnamefont {M.}~\bibnamefont {Hafezi}}, \bibinfo {author}
  {\bibfnamefont {A.~V.}\ \bibnamefont {Gorshkov}}, \bibinfo {author}
  {\bibfnamefont {R.~M.}\ \bibnamefont {Wilson}},\ and\ \bibinfo {author}
  {\bibfnamefont {M.~F.}\ \bibnamefont {Maghrebi}},\ }\bibfield  {title}
  {\bibinfo {title} {Emergent equilibrium in many-body optical bistability},\
  }\href {https://doi.org/10.1103/PhysRevA.95.043826} {\bibfield  {journal}
  {\bibinfo  {journal} {Phys. Rev. A}\ }\textbf {\bibinfo {volume} {95}},\
  \bibinfo {pages} {043826} (\bibinfo {year} {2017})}\BibitemShut {NoStop}%
\bibitem [{\citenamefont {Bondar}\ \emph {et~al.}(2019)\citenamefont {Bondar},
  \citenamefont {Gay-Balmaz},\ and\ \citenamefont {Tronci}}]{2019-Bondar}%
  \BibitemOpen
  \bibfield  {author} {\bibinfo {author} {\bibfnamefont {D.~I.}\ \bibnamefont
  {Bondar}}, \bibinfo {author} {\bibfnamefont {F.}~\bibnamefont {Gay-Balmaz}},\
  and\ \bibinfo {author} {\bibfnamefont {C.}~\bibnamefont {Tronci}},\
  }\bibfield  {title} {\bibinfo {title} {Koopman wavefunctions and
  classical–quantum correlation dynamics},\ }\href
  {https://doi.org/10.1098/rspa.2018.0879} {\bibfield  {journal} {\bibinfo
  {journal} {Proceedings of the Royal Society A: Mathematical, Physical and
  Engineering Sciences}\ }\textbf {\bibinfo {volume} {475}},\ \bibinfo {pages}
  {20180879} (\bibinfo {year} {2019})}\BibitemShut {NoStop}%
\bibitem [{\citenamefont {Bopp}(1956)}]{1956-Bopp}%
  \BibitemOpen
  \bibfield  {author} {\bibinfo {author} {\bibfnamefont {F.}~\bibnamefont
  {Bopp}},\ }\bibfield  {title} {\bibinfo {title} {La m\'ecanique quantique
  est-elle une m\'ecanique statistique classique particuli\`ere?},\ }\href
  {http://www.numdam.org/item/AIHP_1956__15_2_81_0/} {\bibfield  {journal}
  {\bibinfo  {journal} {Annales de l'institut Henri Poincar\'e}\ }\textbf
  {\bibinfo {volume} {15}},\ \bibinfo {pages} {81--112} (\bibinfo {year}
  {1956})}\BibitemShut {NoStop}%
\bibitem [{\citenamefont {Bondar}\ \emph {et~al.}(2013)\citenamefont {Bondar},
  \citenamefont {Cabrera}, \citenamefont {Zhdanov},\ and\ \citenamefont
  {Rabitz}}]{2013-Bondar}%
  \BibitemOpen
  \bibfield  {author} {\bibinfo {author} {\bibfnamefont {D.~I.}\ \bibnamefont
  {Bondar}}, \bibinfo {author} {\bibfnamefont {R.}~\bibnamefont {Cabrera}},
  \bibinfo {author} {\bibfnamefont {D.~V.}\ \bibnamefont {Zhdanov}},\ and\
  \bibinfo {author} {\bibfnamefont {H.~A.}\ \bibnamefont {Rabitz}},\ }\bibfield
   {title} {\bibinfo {title} {Wigner phase-space distribution as a wave
  function},\ }\href {https://doi.org/10.1103/PhysRevA.88.052108} {\bibfield
  {journal} {\bibinfo  {journal} {Phys. Rev. A}\ }\textbf {\bibinfo {volume}
  {88}},\ \bibinfo {pages} {052108} (\bibinfo {year} {2013})}\BibitemShut
  {NoStop}%
\end{thebibliography}%

\end{document}